\newtheorem{theorem}{Theorem}
\newtheorem{lemma}{Lemma}
\newtheorem{definition}{Definition}
\newtheorem{proposition}{Proposition}
\newcommand{\pt}{{\mathcal T}}
\newcommand{\cost}{{\rm cost}}
\title{A Logarithmic Integrality Gap Bound for Directed Steiner Tree
  in Quasi-bipartite Graphs
  \footnote{This work was in part supported by NSERC's Discovery grant program.
    The second author greatfully acknowledges the support of the Hausdorff
    Institute and the Institute for Discrete Mathematics in Bonn, Germany.}
}
\author[1]{Zachary Friggstad}
\author[2]{Jochen K\"{o}nemann}
\author[3]{Mohammad Shadravan}
\affil[1]{Department of Computing Science, University of Alberta\\
  Edmonton, AB, Canada, T6G 2E8\\
  \texttt{zacharyf@ualberta.ca}}
\affil[2]{Department of Combinatorics and Optimization, University of Waterloo\\
  Waterloo, ON, Canada, N2L 3G1\\  \texttt{jochen@uwaterloo.ca}}
\affil[3]{Department of Industrial Engineering and Operations Research,
Columbia University\\
New York, NY, USA, 10027\\
  \texttt{ms4961@columbia.edu}}
\begin{document}

\maketitle

\begin{abstract}
  We demonstrate that the integrality gap of the natural cut-based LP
  relaxation for the directed Steiner tree problem is $O(\log k)$ in
  quasi-bipartite graphs with $k$ terminals. Such instances can be
  seen to generalize set cover, so the integrality gap analysis is
  tight up to a constant factor.  A novel aspect of our approach is that we
  use the primal-dual method; a technique that is rarely
  used in designing approximation algorithms for network design
  problems in directed graphs.
 \end{abstract}

\section{Introduction}

In an instance of the {\em directed Steiner tree} (DST) problem, we are
given a directed graph $G = (V,E)$, non-negative costs $c_e$ for all
$e \in E$, {\em terminal} nodes $X \subseteq V$, and a root $r \in
V$. The remaining nodes in $V - (X \cup \{r\})$ are the {\em Steiner nodes}.
The goal is to find the cheapest collection of edges $F \subseteq
E$ such that for every terminal $t \in X$ there is an $r,t$-path using only
edges in $F$. Throughout, we let $n$ denote $|V|$ and $k$ denote $|X|$.

If $X \cup \{r\} = V$, then the problem is simply the {\em
  minimum-cost arborescence} problem which can be solved
efficiently~\cite{edm67}. However, the general case is well-known to
be NP-hard. In fact, the problem can be seen to generalize the 
{\em set-cover} and 
{\em group
Steiner tree} problems. The latter cannot be approximated within
$O(\log^{2-\epsilon}(n))$ for any constant $\epsilon > 0$ unless ${\rm
  NP} \subseteq {\rm DTIME}(n^{{\rm polylog}(n)})$
\cite{halperin:krauthgamer}.

For a DST instance $G$, let $OPT_G$
denote the value of the optimum solution for this instance 
Say that an instance $G = (V,E)$ of DST with terminals $X$ is {\em
  $\ell$-layered} if $V$ can be partitioned as $V_0, V_1, \ldots,
V_\ell$ where $V_0 = \{r\}, V_\ell = X$ and every edge $uv \in E$ has
$u \in V_i$ and $v \in V_{i+1}$ for some $0 \leq i < \ell$.
Zelikovsky showed for any DST
instance $G$ and integer $\ell \geq 1$ that we can compute an $\ell$-layered
DST instance $H$ in poly$(n,\ell)$ time such that $OPT_G \leq OPT_H \leq \ell \cdot k^{1/\ell}
\cdot OPT_G$ and that a DST solution in $H$ can be efficiently mapped to a
DST solution in $G$ with the same cost~\cite{calinescu:zelikovsky,zelikovsky}.

Charikar et al.~\cite{charikar} exploited this fact and presented an $O(\ell^2\cdot k^{1/\ell} \cdot \log k)$-approximation
with running time ${\rm poly}(n, k^{\ell})$ for any integer $\ell \geq 1$.
In particular, this can be used to obtain an $O(\log^3 k)$-approximation in quasi-polynomial time
and a polynomial-time $O(k^\epsilon)$-approximation for any constant $\epsilon > 0$.
Finding a polynomial-time polylogarithmic approximation remains an important open problem.

For a set of nodes $S$, we let $\delta^{in}(S) = \{uv \in V : u \not\in S {\rm ~and~} v \in S\}$
be the set of edges entering $S$.
The following is 
a natural linear programming (LP) relaxation for directed Steiner tree.
\begin{alignat}{3}
\min & \quad & \sum_{e \in E} c_e& x_e  \tag{\bf DST-Primal} \label{lp:primal} \\
\text{s.t.} && x(\delta^{\rm in}(S)) \geq & ~1 && ~~\forall~ S \subseteq V-r, S \cap X \neq \emptyset \label{cnst:primal} \\
&& x_e \geq & ~0 && ~~\forall e \in E \notag
\end{alignat}
This LP is called a {\em relaxation} because of the natural correspondence between feasible solutions to a DST instance $G$
and feasible $\{0,1\}$-integer solutions to the corresponding LP (\ref{lp:primal}). Thus, if we let $OPT_{LP}$ denote the
value of an optimum (possibly fractional) solution to LP (\ref{lp:primal}) then we have $OPT_{LP} \leq OPT_G$.
For a particular instance $G$ we say the {\em integrality gap} is $OPT_G / OPT_{LP}$; we are interested in placing the smallest
possible upper bound on this quantity.

Interestingly, if $|X| = 1$ (the shortest path problem) or $X \cup \{r\} = V$ (the minimum-cost arborescence problem), the extreme points
of \eqref{lp:primal} are integral so the integrality gap is 1 (\cite{papadimitriou} and \cite{edm67}, respectively).
However, in the general case Zosin and Khuller showed that \eqref{lp:primal} is not useful 
for finding $\mbox{polylog}(k)$-approximation algorithms for DST~\cite{zosin:khuller}.
The authors showed that the integrality gap of \eqref{lp:primal} relaxation can,
unfortunately, be as bad as $\Omega(\sqrt k)$, even in instances where
$G$ is a 4-layered graph. In their examples, the number of nodes $n$
is exponential in $k$ so the integrality gap may still be $O(\log^c n)$ for some constant $c$.

On the other hand, Rothvoss
recently showed that applying $O(l)$ rounds of the semidefinite programming
Lasserre hierarchy to (the flow-based extended formulation of)
\eqref{lp:primal} yields an SDP with integrality gap $O(\ell\cdot\log k)$
for $\ell$-layered instances~\cite{rothvoss}. Subsequently, Friggstad et al.~\cite{friggstad+14}
showed similar results for the weaker Sherali-Adams and Lov\'asz-Schrijver
linear programming hierarchies.

In this paper we consider the class of {\em quasi-bipartite} DST
instances. An instance of DST is quasi-bipartite if the Steiner
nodes $V\setminus (X \cup \{r\})$ form an independent set (i.e., no
directed edge has both endpoints in $V\setminus (X \cup \{r\})$).
Such instances still capture the set cover problem, and thus do not
admit an $(1-\epsilon)\ln k$-approximation for any constant
$\epsilon > 0$ unless ${\rm P} = {\rm NP}$~\cite{steurer,Fe98}.
Furthermore, it is straightforward to adapt known integrality gap
constructions for set cover (e.g.~\cite{vazirani}) 
to show that
the integrality gap of \eqref{lp:primal} can be as bad as
$(1 - o(1)) \cdot \ln k$ in some instances.  Hibi and
Fujito~\cite{HF12} give an $O(\log k)$-approximation for
quasi-bipartite instances of DST, but do not provide any integrality
gap bounds.

Quasi-bipartite instances have been well-studied in the context of {\em
undirected} Steiner trees.  The class of graphs was first
introduced by Rajagopalan and Vazirani~\cite{rv99} who studied the
integrality gap of \eqref{lp:primal} for the {\em bidirected} map of the
given undirected Steiner tree instances. Currently, the best approximation
for quasi-bipartite instances of undirected Steiner tree is $\frac{73}{60}$
by Goemans et al. \cite{goemans} who also bound the integrality gap
of the bidirected cut relaxation by the same quantity. This is the same LP relaxation as
(\ref{lp:primal}), applied to the directed graph obtained by replacing each undirected
edge $\{u,v\}$ with the two directed edges $uv$ and $vu$.
This is a slight improvement over a prior
$(\frac{73}{60} + \epsilon)$-approximation for any constant $\epsilon > 0$
by Byrka et al. \cite{byrka}.

The best approximation for general instances
of undirected Steiner tree is $\ln(4) + \epsilon$ for any constant $\epsilon > 0$
\cite{byrka}. However, the best known upper bound on the integrality gap of the bidirected cut relaxation
for non-quasi-bipartite instances is only 2; it is an open problem to determine
if this integrality gap is a constant-factor better than 2.

\subsection{Our contributions}
Our main result is the following. Let $H_n = \sum_{i=1}^n 1/i = O(\log n)$ be the $n$th harmonic number.

\begin{theorem}\label{thm:qbt-gap}
  The integrality gap of LP (\ref{lp:primal}) is at most $2 \, H_k
  = O(\log k)$ in quasi-bipartite graphs with $k$ terminals.
  Furthermore, a Steiner tree with cost at most $2 \,H_k \cdot OPT_{LP}$ can be constructed in polynomial time.
\end{theorem}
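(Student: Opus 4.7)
The plan is to use a greedy/primal-dual algorithm in the spirit of the $H_n$-approximation for set cover, iteratively adding the densest augmenting substructure and then reinterpreting the greedy choices as a dual certificate.

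First, I would set up the greedy algorithm. Start with $F = \emptyset$ and at each iteration let $X' \subseteq X$ be the terminals not yet reachable from $r$ in $F$. An \emph{augmenting structure} $A \subseteq E \setminus F$ is one that, when added to $F$, makes at least one terminal in $X'$ reachable from $r$; its \emph{density} is $\cost(A)$ divided by the number of newly-reachable terminals in $X'$. The key simplification afforded by the quasi-bipartite hypothesis is that it suffices to consider two very restricted types of augmenting structures: (i) a single edge from a currently-reached node to a terminal in $X'$, and (ii) a ``star'' consisting of an edge $uv$ from a reached node $u$ into a Steiner node $v$, together with a set of outgoing edges $v t_1, \ldots, v t_p$ to terminals in $X'$. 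I would first argue that any more complex augmentation contains a sub-augmentation of one of these forms whose density is no larger, so the greedy step may be carried out over this restricted family.

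Second, I would prove the key density lemma: at every iteration there exists an augmenting structure of type (i) or (ii) with density at most $2\cdot OPT_{LP}/|X'|$. The plan is to apply LP duality to the residual subproblem of connecting $X'$ from the super-source consisting of $r$ together with all currently-reached nodes. The restriction of an optimal solution of (\ref{lp:primal}) to this subproblem has cost at most $OPT_{LP}$ and delivers at least one unit of incoming flow to each terminal in $X'$. I would then decompose this fractional solution: direct-edge portions account for type (i) augmentations, and, for each Steiner node $s$, the incoming flow at $s$ can be paired with outgoing flow from $s$ proportionally to give a fractional distribution over type (ii) stars. Averaging the cost of these pieces against the total flow they deliver to $X'$ produces an individual structure achieving the claimed density; the factor of $2$ accounts for the fact that the incoming edge to a Steiner node and its outgoing edges are paid separately in the star cost.

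Third, I would conclude with the standard harmonic-summation argument. Let $k_i$ denote $|X'|$ at the start of iteration $i$ (so $k_1 = k$) and suppose iteration $i$ connects $p_i = k_i - k_{i+1}$ new terminals at incremental cost $c_i \leq 2\cdot OPT_{LP} \cdot p_i / k_i$. Then $\sum_i c_i \leq 2\cdot OPT_{LP}\cdot \sum_i p_i/k_i \leq 2\,H_k \cdot OPT_{LP}$. For the primal-dual certificate, I would assign to each singleton $\{t\}$ a dual value equal to the density at the iteration when $t$ is first connected; a direct check shows that scaling this $y$ by $1/(2\,H_k)$ produces a feasible dual for (\ref{lp:primal}) whose value equals $\cost(F)/(2\,H_k)$, confirming the integrality gap bound. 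Since each iteration can be implemented in polynomial time (scanning candidate stars centred at each Steiner node and choosing the minimum-density subset of its out-neighbours greedily), the algorithm runs in polynomial time.

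The main obstacle is the density lemma together with the reduction to simple augmenting structures. Quasi-bipartiteness is essential here: it forces every $r$-to-terminal walk in the LP support to decompose into pieces through a single Steiner node, which lets us cleanly amortize the LP cost across the unreached terminals. In general directed graphs, an $r$-to-terminal path may chain through arbitrarily many Steiner-Steiner edges, and no such decomposition exists; the $\Omega(\sqrt{k})$ integrality gap of Zosin and Khuller confirms that a direct analogue of the lemma must fail outside the quasi-bipartite setting.
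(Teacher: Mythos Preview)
Your proof has a genuine gap: the reduction to augmenting structures of types (i) and (ii) is incorrect. Quasi-bipartiteness forbids Steiner--Steiner edges, but it does \emph{not} forbid terminal--terminal or terminal--Steiner edges, so an optimal tree may have arbitrary depth, and the first hop out of the currently reached set $R$ can be expensive while covering only one terminal even when the full subtree beneath it is cheap and covers many. Concretely, take $r$, terminals $t_0,\dots,t_k$, edge $r\to t_0$ of cost $1$, and edges $t_0\to t_i$ of cost $\epsilon$ for $1\le i\le k$ (no Steiner nodes, so trivially quasi-bipartite). Here $OPT_{LP}=1+k\epsilon$, $|X'|=k+1$, and every type-(i)/(ii) augmentation from $R=\{r\}$ has density at least $1$, whereas $2\,OPT_{LP}/|X'|=2(1+k\epsilon)/(k+1)\to 0$. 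Your sub-augmentation claim fails because the only type-(i) piece of the full tree is $\{r\to t_0\}$, with density $1$. Allowing terminal-centred stars does not save the argument either: replace $t_0$ by a longer alternating terminal/Steiner path before the fan-out and the same phenomenon recurs. Your LP-decomposition step has the same hole: the unit of in-flow at a terminal $t\in X'$ may come from \emph{another terminal in $X'$}, and that contribution is accounted for by neither type (i) nor type (ii).

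The paper circumvents this by \emph{not} insisting on growing outward from the root. It maintains a partial Steiner tree with several components, each headed by a terminal, and in each phase runs a primal-dual moat-growing procedure around \emph{all} heads simultaneously, merging components when a tight path links one component's virtual body to another's head. The merge cost is charged against the duals grown around the heads of the merged components, yielding an incremental cost of at most $2\,OPT_{LP}\cdot(\ell-\ell')/\ell$ when $\ell$ non-root components become $\ell'$; the harmonic sum then finishes as in your third step. In the example above, the cheap edges $t_0\to t_i$ cause $t_0$'s component to absorb all the $t_i$'s first, and the expensive edge $r\to t_0$ is paid only once $\ell=1$, where the bound is slack enough. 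Growing only from the reached-from-$r$ set, as you do, forces that expensive edge to be paid while $|X'|$ is still large, and no per-iteration density bound of the form $O(OPT_{LP}/|X'|)$ can hold.
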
 

As noted above, 
Theorem \ref{thm:qbt-gap} is asymptotically tight since any of the well-known $\Omega(\log k)$ integrality
gap constructions for set cover instances with $k$ items translate directly to an integrality gap
lower bound for (\ref{lp:primal}), using the usual reduction from set cover to 2-layered
quasi-bipartite instances of directed Steiner tree.

This integrality gap bound asymptotically matches the approximation guarantee
proven by Hibi and Fujito for quasi-bipartite DST instaces~\cite{HF12}.
We remark that their approach is unlikely to give any integrality gap bounds for (\ref{lp:primal}) because they iteratively choose {\em low-density full Steiner trees}
in the same spirit as~\cite{charikar} and give an $O(\ell \cdot \log k)$-approximation for finding the optimum DST solution $T$ that does not contain
a path with $\geq \ell$ Steiner nodes $V\setminus (X \cup \{r\})$. In particular, their approach will also find an $O(\log k)$-approximation
to the optimum DST solution in 4-layered graphs and we know the integrality gap in some 4-layered instances is $\Omega(\sqrt k)$~\cite{zosin:khuller}.

We prove Theorem \ref{thm:qbt-gap} by constructing a directed Steiner tree
in an iterative manner. An iteration starts with a {\em partial}
Steiner tree (see Definition \ref{def:partial} below), which consists of multiple directed components containing
the terminals in $X$. Then a set of arcs are purchased to {\em augment} this partial solution
to one with fewer directed components. These arcs are discovered through a primal-dual moat growing procedure;
a feasible solution for the dual \eqref{lp:primal} is constructed and the cost of the purchased arcs can be bounded
using this dual solution.

While the primal-dual technique has been very successful for {\em
  undirected} network design problems (e.g., see \cite{forest}), far
fewer success stories are known in {\em directed} domains. Examples
include a primal-dual interpretation of Dijkstra's shortest path
algorithm (e.g., see Chapter 5.4 of \cite{papadimitriou}), and
Edmonds'~\cite{edm67} algorithm for minimum-cost arborescences. In
both cases, the special structure of the problem is instrumental in
the primal-dual construction. One issue arising in the implementation
of primal-dual approaches for directed network design problems appears
to be a certain {\em overlap} in the {\em moat} structure maintained by
these algorithms. We are able to handle this difficulty here by exploiting the
quasi-bipartite nature of our instances.  

%%%%%%%%%%%%%%%%%%%%%%%%%%%%%%%%%%%%%%%%%%%%%%%%%%%%%%%%%%%%%%%%%%%%%%%%
%%%%%%%%%%%%%%%%%%%%%%%%%%%%%%%%%%%%%%%%%%%%%%%%%%%%%%%%%%%%%%%%%%%%%%%%

\section{The integrality gap bound} \label{sec:rounding}

\subsection{Preliminaries and definitions}
We now present an algorithmic proof of Theorem \ref{thm:qbt-gap}.
As we will follow a primal-dual strategy, we first present the LP dual of
\eqref{lp:primal}. 

\begin{alignat}{3}
\max & \quad & \sum_S y_S &  \tag{\bf DST-Dual} \label{lp:dual} \\
\text{s.t.} && \sum_{S : e \in \delta^{\rm in}(S)} y_S \leq & ~c_e && ~\forall e \in E  \label{cnst:dual} \\
&& y \geq & 0 && \nonumber
\end{alignat}
In (\ref{lp:dual}), the sums range only over sets of nodes $S$ such that $S \subseteq V-r$ and $S \cap X \neq \emptyset$.

Our algorithm builds up partial solutions, which are defined as follows.
\begin{definition} \label{def:partial}
A {\em partial Steiner tree} is a tuple $\pt = (\{B_i, h_i, F_i\}_{i=0}^\ell, \bar{B})$ where, for each $0 \leq i \leq \ell$,
$B_i$ is a subset of nodes, $h_i \in B_i$,
and $F_i$ is a subset of edges with endpoints only in $B_i$ such that the following hold.
\begin{itemize}
\item The sets $B_0, B_1, \ldots, B_\ell, \bar{B}$ form a partition $V$.
\item $\bar{B} \subseteq V-X-r$ (i.e. $\bar{B}$ is a subset of Steiner nodes).
\item $h_0 = r$ and $h_i \in X$ for each $1 \leq i \leq \ell$.
\item For every $0 \leq i \leq \ell$ and every $v \in B_i$, $F_i$ contains an $h_i,v$-path.
\end{itemize}
We say that $\bar{B}$ is the set of {\em free Steiner nodes} in $\pt$ and that $h_i$ is the {\em head} of $B_i$ for
each $0 \leq i \leq \ell$.
The edges of $\pt$, denoted by $E(\pt)$, are simply $\cup_{i=0}^\ell F_i$.
We say that $B_0, \ldots, B_\ell$ are the {\em components} of $\pt$ where $B_0$ is the {\em root component}
and $B_1, \ldots, B_\ell$ are the {\em non-root components}.
\end{definition}
Figure \ref{fig:partial} illustrates a partial Steiner tree.
Note that if $\pt$ is a partial Steiner tree with $\ell = 0$ non-root components, then $E(\pt)$ is in fact a feasible DST solution.

Finally, for a subset of edges $F$ we let $\cost(F) = \sum_{e \in F} c_e$.

\begin{figure}[ht]
\centering
\includegraphics[scale=0.7]{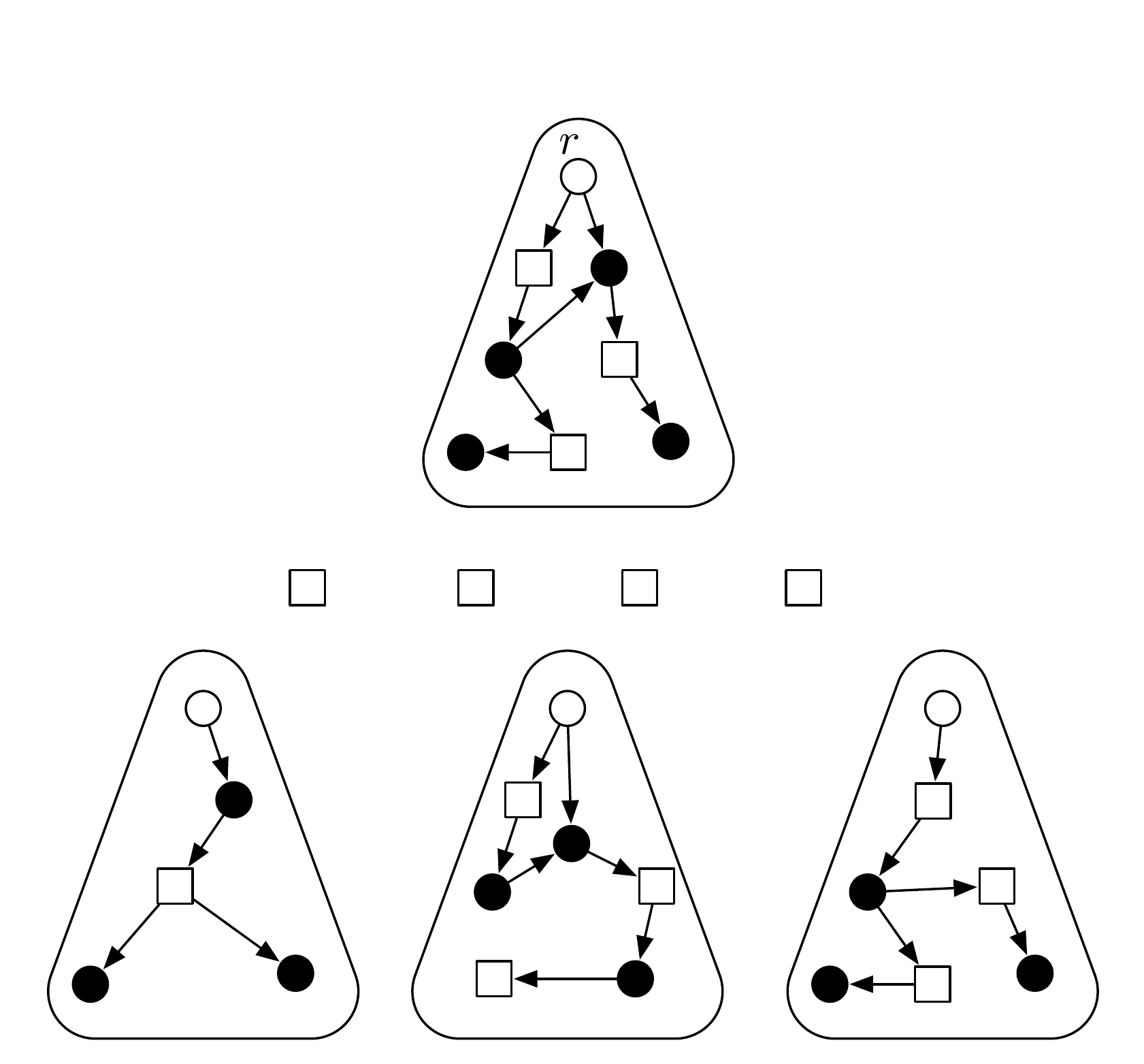}
\caption{A partial Steiner tree with $\ell = 3$ non-root components (the root is pictured at the top). The only edges shown are those
in some $F_i$.
The white circles are the heads of the various sets $B_i$ and the black circles are terminals that are not heads
of any components. The squares outside of the components are the free Steiner nodes $\bar{B}$. Note, in particular, that each head can reach every
node in its respective component. We do not require each $F_i$ to be a minimal set of edges with this property. }
\label{fig:partial}
\end{figure}

\subsection{High-level approach}

Our algorithm builds up partial Steiner trees in an iterative manner
while ensuring that the cost does not increase by a
significant amount between iterations.  Specifically, we prove the
following lemma in Section \ref{sec:growing}. Recall that $OPT_{LP}$
refers to the optimum solution value for (\ref{lp:primal}).
\begin{lemma}\label{lem:grow}
Given a partial Steiner tree $\pt$ with $\ell \geq 1$ non-root components, there is a polynomial-time algorithm
that finds a partial Steiner tree $\pt'$ with $\ell' < \ell$ non-root components such that
\[ \cost(E(\pt')) \leq \cost(E(\pt)) + 2 \cdot OPT_{LP}\cdot \frac{\ell - \ell'}{\ell}.\]
\end{lemma}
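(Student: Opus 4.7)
The plan is to run a primal-dual moat-growing procedure. Initialize $y \equiv 0$, a bought-edge set $F = \emptyset$, and an active moat $M_i = \{h_i\}$ for each $i = 1, \ldots, \ell$. Grow all active moats' dual variables $y_{M_i}$ at the same uniform rate while keeping feasibility for \eqref{lp:dual}. When an edge $uv$ with $v$ in some active moat $M$ and $u \notin M$ first becomes tight, add $uv$ to $F$ and then take one of three actions: absorb $u$ into $M$ if $u$ is a free Steiner node; merge $M$ with the active moat containing $u$ if there is one; or declare $M$ as having \emph{reached the root} if $u \in B_0$. Halt at the first occurrence of the last action.

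At the stopping time, let $R \subseteq \{h_1, \ldots, h_\ell\}$ be the set of heads now reachable from $r$ in $F_0 \cup F$, and set $s := |R| \geq 1$. Form $\pt'$ by absorbing, for each $h_i \in R$, the component $B_i$ (and its edges $F_i$) into the root component, together with the bought edges of $F$ that lie on the constructed $r$-to-$h_i$ paths; leave the remaining $\ell' := \ell - s < \ell$ non-root components unchanged. By construction, $\pt'$ is a valid partial Steiner tree with $\ell' < \ell$ non-root components, and $E(\pt') \setminus E(\pt)$ consists only of bought edges appearing on some $r$-to-$h_i$ path with $h_i \in R$.

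The cost bound comes from a primal-dual analysis. Using $\cost(F) = \sum_S y_S \cdot |\delta^{\rm in}(S) \cap F|$ and LP weak duality $\sum_S y_S \leq OPT_{LP}$, the central claim I would prove is
\[
  \cost\bigl(E(\pt') \setminus E(\pt)\bigr) \;\leq\; 2 \cdot \frac{s}{\ell} \cdot \sum_S y_S,
\]
which then yields the desired $\cost(E(\pt')) \leq \cost(E(\pt)) + 2 \cdot OPT_{LP} \cdot (\ell - \ell')/\ell$. The factor $2$ will come from a degree-counting argument (on average each moat snapshot has at most two boundary-crossing $F$-edges selected into $\pt'$), and the factor $s/\ell$ reflects halting at the first root-reaching event: all $\ell$ initial moats contribute to the growing dual while active, but only the $s$ of them that end up with a served head pay for the useful bought edges.

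The principal obstacle is this structural bound. In general directed network-design problems it fails because distinct moats may overlap heavily on shared absorbed Steiner vertices---precisely the difficulty flagged in the introduction. The plan is to exploit quasi-bipartiteness---the independence of Steiner nodes---to restrict how moats can merge or overlap, thereby making the moat family sufficiently laminar for a clean averaging argument that also yields the desired $s/\ell$ density improvement. A secondary technical point is the path-reconstruction step, which identifies exactly which bought edges of $F$ belong in $E(\pt')$; here too the quasi-bipartite structure should supply enough rigidity to read off an $r$-to-$h_i$ path in $F_0 \cup F$ for every $h_i \in R$.
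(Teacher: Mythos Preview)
Your primal-dual skeleton is right, but both the stopping rule and the accounting differ from the paper's, and the differences create real gaps.

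The paper neither merges moats nor waits for a moat to reach $B_0$. All $\ell$ moats $M_i$ grow independently (overlap on $\bar B$ is tolerated), and in parallel the paper maintains a \emph{virtual body} $\beta_j\supseteq B_j$, namely $B_j$ together with every free Steiner node that already has a tight incoming arc from $B_j$. The algorithm halts the \emph{first} time a tight arc $uv$ has $u\in\beta_j$ and $v\in M_{i'}$ for some $i'\neq j$; in particular $j$ need not be $0$. Since all $\ell$ moats are active until the stopping time $\Delta$, the dual value is exactly $\ell\Delta$. Letting $J=\{i'\neq j:v\in M_{i'}\}$, one buys the single mate arc from $B_j$ to $u$ (cost $\le\Delta$), the arc $uv$, and for each $i'\in J$ a shortest $v\to h_{i'}$ path (cost $\le\Delta$ because $v\in M_{i'}$). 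A short telescoping bound gives added cost $\le(|J|+1)\Delta\le 2|J|\Delta$, i.e.\ $\le 2\tfrac{\ell-\ell'}{\ell}\,OPT_{LP}$ with $\ell'=\ell-|J|$. No degree counting is used; quasi-bipartiteness enters only to show that a tight arc cannot cross two distinct moats unless its tail already lies in some $\beta_j$, which is precisely the halting trigger.

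In your scheme two things are unsubstantiated. First, the three-way case split omits $u\in B_j$ for some $1\le j\le\ell$ with $u$ not yet absorbed into any active moat; this is exactly the event the paper's virtual-body rule is built to detect and exploit. Second, once you merge moats and keep growing, every bought arc still points from a newly absorbed node toward an older moat node, so when you finally touch $B_0$ at some $v'$, the bought-arc chain out of $v'$ reaches a \emph{single} head; a merge arc offers a branch only from its tail side, which your $r\to v'$ path need not traverse. Thus $s$ can equal $1$ even after many merges, while the dual has been consumed by all the merged moats. The hoped-for bound $\cost\le 2\tfrac{s}{\ell}\sum_S y_S$ then needs a genuine argument you have not supplied, and the ``laminarity via quasi-bipartiteness'' intuition is misplaced: the moats here really do overlap on $\bar B$, and the paper handles that overlap through the virtual-body device and the early stopping rule rather than any laminar structure.
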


Theorem \ref{thm:qbt-gap} follows from Lemma \ref{lem:grow} in a standard way.

\begin{proof}[Proof of Theorem \ref{thm:qbt-gap}]
  Initialize a partial Steiner tree $\pt_k$ with $k$ non-root
  components as follows. Let $\bar{B}$ be the set of all Steiner nodes,
  $B_0=\{r\}$, and $F_0=\emptyset$. Furthermore, label
  the terminals as $t_1, \ldots, t_k \in X$ and for each
  $1 \leq i \leq k$ let $B_i = \{t_i\}, h_i = t_i$ and
  $F_i = \emptyset$. Note that $\cost(E(\pt_k)) = 0$.

Iterate Lemma \ref{lem:grow} to obtain a sequence of partial Steiner trees $\pt_{\ell_0}, \pt_{\ell_1}, \pt_{\ell_2}, \ldots, \pt_{\ell_a}$
where $\pt_{\ell_i}$ has $\ell_i$ non-root components such that $k = \ell_0 > \ell_1 > \ldots > \ell_a = 0$ and
\[\cost(E(\pt_{i+1})) \leq \cost(E(\pt_i)) + 2 \cdot OPT_{LP}\cdot \frac{\ell_i - \ell_{i+1}}{\ell_i}\] for each $0 \leq i < a$. Return $E(\pt_{\ell_a})$
as the final Steiner tree. 

That $E(\pt_a)$ can be found efficiently follows simply because we are iterating the efficient algorithm from Lemma \ref{lem:grow}
at most $k$ times. The cost of this Steiner tree can be bounded as follows.
\begin{eqnarray*}
\cost(E(\pt_a)) & \leq & 2\cdot OPT_{LP} \cdot \sum_{i=0}^{a-1} \frac{\ell_i - \ell_{i+1}}{\ell_i}
= 2\cdot OPT_{LP}\cdot\sum_{i=0}^{a-1} ~\sum_{j=\ell_{i+1}+1}^{\ell_i} \frac{1}{\ell_i} \\
& \leq & 2\cdot OPT_{LP}\cdot\sum_{i=0}^{a-1} ~\sum_{j=\ell_{i+1}+1}^{\ell_i} \frac{1}{j} = 2\cdot OPT_{LP}\cdot \sum_{j=1}^k \frac{1}{k} \\
& = & 2\cdot OPT_{LP}\cdot H_k.
\end{eqnarray*}
\end{proof}

The idea presented above resembles one proposed by Guha et
al.~\cite{guha} for bounding the integrality gap of a natural
relaxation for {\em undirected} node-weighted Steiner tree by
$O(\log k)$~\cite{guha}. Like our approach, Guha et al.~also build a
solution incrementally. In each {\em phase} of the algorithm, the
authors reduce the number of connected components of a partial
solution by adding vertices whose cost is charged carefully to the
value of a dual LP solution that the algorithm constructs
simultaneously.

%%%%%%%%%%%%%%%%%%%%%%%%%%%%%%%%%%%%%%%%%%%%%%%%%%%%%%%%%%%%%%%%%%%%%%%%

\section{A primal-dual proof of Lemma \ref{lem:grow}} \label{sec:growing}

Consider a given partial Steiner tree
$\pt = (\{B_i, h_i, F_i\}_{i=0}^\ell, \bar{B})$ with $\ell \geq 1$
non-root components. Lemma \ref{lem:grow} promises a partial Steiner
tree $\pt'$ with $\ell' < \ell$ non-root components with
$\cost(E(\pt')) \leq \cost(E(\pt)) + 2 \cdot OPT_{LP} \cdot \frac{\ell
  - \ell'}{\ell}$.
In this section we will present an algorithm that {\em augments}
forest $\pt$ in the sense that it computes a set of edges to {\em add}
to $\pt$. The proof presented here is constructive: we will design a
{\em primal-dual} algorithm that maintains a feasible dual solution
for \eqref{lp:dual}, and uses the structure of this solution to
guide the process of adding edges to $\pt$.

\subsection{The algorithm}
For any two nodes $u,v \in V$, let $d(u,v)$ be the cost of the
cheapest $u,v$-path in $G$.  More
generally, for a subset $\emptyset \subsetneq S \subseteq V$ and a
node $v \in V$ we let $d(S,v) = \min_{u \in S} d(u,v)$.  We will
assume that for every $0 \leq i \leq \ell$ and
$1 \leq j \leq \ell, j \neq i$ that $d(B_i,h_j) > 0$ as otherwise, we
could merge $B_i$ and $B_j$ by adding the $0$-cost $B_i,h_j$-path to
$\pt$. 

The usual conventions of primal-dual algorithms will be adopted. We
think of such an algorithm as a continuous process that increases the
value of some dual variables over time. At time $t=0$, all dual
variables are initialized to a value of $0$. At any point in time,
exactly $\ell$ dual variables will be raised at a rate of one unit per
time unit. We will use $\Delta$ for the time at which the algorithm
terminates.  As is customary, we will say that an edge $e$ {\em goes
  tight} if the dual constraint for $e$ becomes tight as the dual
variables are being increased. When an edge goes tight, we will
perform some updates to the various sets being maintained by the
algorithm. Again, the standard convention applies that if multiple
edges go tight at the same time, then we process them in any order.

Algorithm \ref{alg:grow} describes the main subroutine that augments
the partial Steiner tree $\pt$ to one with fewer components.  It
maintains a collection of {\em moats} $M_i \subseteq V - \{r\}$ and
edges $F'_i$ for each $1 \leq i \leq \ell$, while ensuring that the
dual solution $y$ it grows remains feasible.  Mainly to aid notation,
our algorithm will maintain a so called {\em virtual body} $\beta_i$
for all $0 \leq i \leq \ell$ such that 
$B_i \subseteq \beta_i \subseteq B_i \cup \bar{B}$.
We will ensure that each $v \in \bar{B} \cap \beta_i$ has a {\em mate} $u \in B_i$ such that the edge
$uv$ has cost no more than $\Delta$. For notational convenience, we
will let $\beta_0=B_0$ be the virtual body of the root component. The
algorithm will not grow a moat around the root since dual variables do not exist for sets containing the root.

Our algorithm will ensure that moats are pairwise {\em
terminal-disjoint}. In fact, we ensure that any two moats may only intersect in
$\bar{B}$. Terminal-disjointness together with the quasi-bipartite
structure of the input graph will allow us to charge the cost of arcs
added in the augmentation process to the duals grown.

An intuitive overview of our process is the following. At any time
$t \geq 0$, the moats $M_i$ will consist of all nodes $v$ with
$d(v,h_i) \leq t$. The moats $M_i$ will be grown until, at some time
$\Delta$, for at least one pair $i, j$ with $i\neq j$, there is a
tight path connecting $\beta_j$ to $h_i$. At this point the algorithm
stops, and adds a carefully chosen collection of tight arcs to the
partial Steiner tree that merges $B_j$ and $B_i$ (and potentially
other components). Crucially, the cost of the added arcs will be {\em
charged} to the value of the dual solution grown around the merged
components.

Due the structure of quasi-bipartite graphs, we are able to ensure that
in each step of the algorithm the active moats pay for at most one arc that is
ultimately bought to form $\pt'$. Also, if $\pt'$ has $\ell' < \ell$ non-root components
then each arc was paid for by moats around at most $\ell - \ell' + 1 \leq 2(\ell - \ell')$ different heads. So, the total cost of all purchased arcs
is at most $2(\ell - \ell') \cdot \Delta$.
Finally, the total dual grown is $\ell \cdot \Delta$, which is $\leq OPT_{LP}$ due to feasibility, so the cost of the edges bought can be bounded by $2 \frac{\ell - \ell'}{\ell} \cdot OPT_{LP}$.

\subsection{Algorithm and invariants}
Now we will be more precise. The primal-dual procedure is presented in Algorithm \ref{alg:grow}.
The following invariants will be
maintained at any time $0 \leq t \leq \Delta$ during the execution of Algorithm \ref{alg:grow}.

\begin{enumerate}
\item For each $1 \leq i \leq \ell$, $h_i \in M_i$ and $M_i \subseteq V-\{r\}$ (so there is a variable $y_{M_i}$ in the dual).
\item For each $1 \leq i \leq \ell$, $M_i = \{v \in V : d(v,h_i) < t\} \cup S$ where $S \subseteq \{v \in V : d(v, h_i) = t\}$.
\item $M_i \cap M_j \subseteq \bar{B}$ and both $\beta_i \cap \beta_j = M_i \cap \beta_j = \emptyset$ for distinct $0 \leq i,j \leq \ell$. 
\item For each $1 \leq i \leq \ell$ we have $B_i \subseteq \beta_i \subseteq B_i \cup \bar{B}$. Furthermore, for each each $v \in \beta_i - B_i$
there is a {\em mate} $u \in B_i$ such that $uv \in E$ and $c_{uv} \leq t$.
\item $y$ is feasible for LP (\ref{lp:dual}) with value exactly $\ell \cdot t$.
\end{enumerate}

These concepts are illustrated in Figure \ref{fig:pd_process}.

\begin{figure}[ht]
\centering
\includegraphics[scale=0.7]{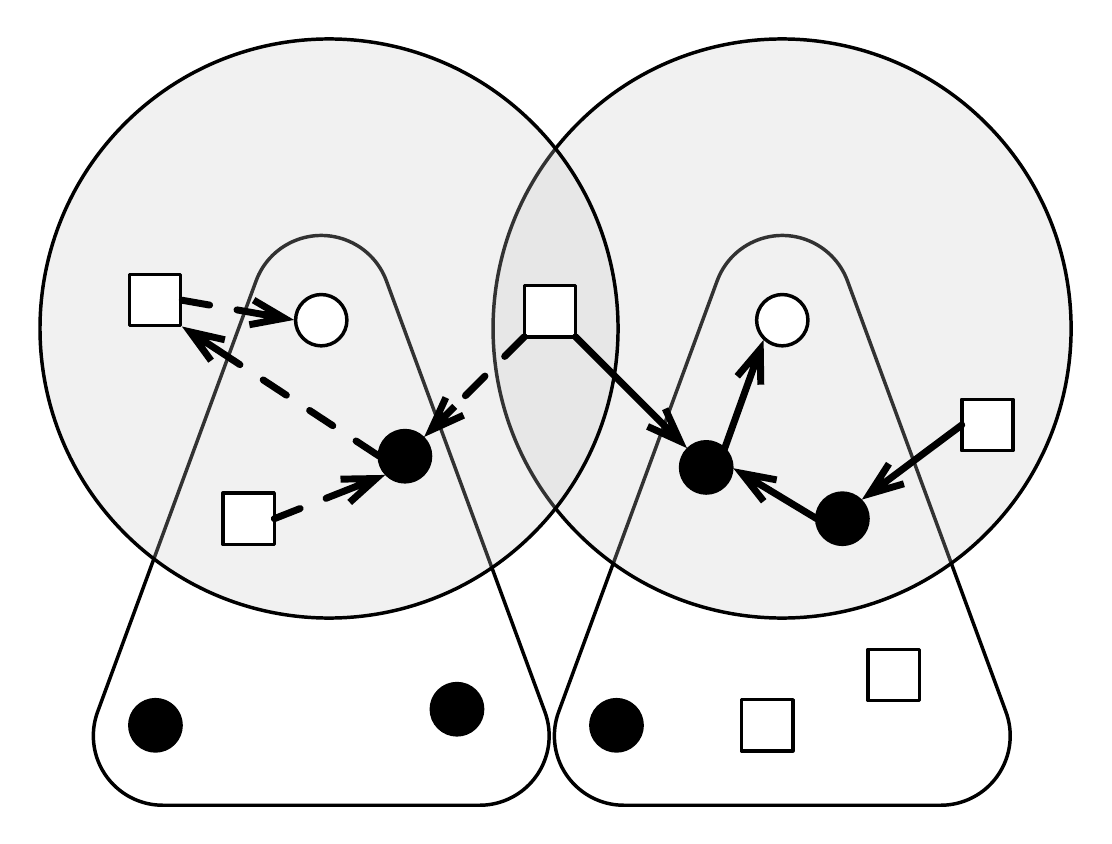}
\caption{ The moats around the two partial Steiner trees are depicted
  by the gray circles.  The dashed edges are those bought by the first
  moat and the solid edges are those bought by the second moat.  Note
  the moats only intersect in $\overline B$ (in particular, $v$ is the
  only lying in both moats).  Also, $u$ lies in the virtual body for
  the left partial Steiner tree and the dashed arc entering $u$ is
  coming from its mate.
  The edges $F_i$ from the original partial Steiner trees are not shown.
  Observe that if any edge entering $v$ goes tight then it must be
  from either $r$ or some terminal (because $G$ is quasi-bipartite).
  This would allow us to merge at least one partial Steiner tree into
  the body of another.
\label{fig:pd_process}
}
\end{figure}

\begin{algorithm*}[ht]
  \caption{~\bf{Dual Growing Procedure}}
  \label{alg:grow} 
\begin{algorithmic}[1] 
\State $M_i \leftarrow \{v \in V : d(v,h_i) = 0\}, 1 \leq i \leq \ell$
\State $\beta_i \leftarrow B_i$ for $0 \leq i \leq \ell$
\State $y \leftarrow {\bf 0}$

\State Raise $y_{M_{i'}}$ uniformly for each moat $M_{i'}$ until some edge $uv$ goes tight  \label{step:grow}

\If{$u \in \beta_j$ for some $0 \leq j \leq \ell$ {\bf and} $v \in M_{i'}$ for some $i' \neq j$} \label{step:bunch}
\State \Return the partial Steiner tree $\pt'$ described in Lemma \ref{lem:augment}. \label{step:stop}
\Else
\State Let $M_i$ be the unique moat with $uv \in \delta^{in}(M_i)$ \Comment c.f. Proposition \ref{prop:unique} \label{step:unique}
\State $M_i \leftarrow M_i \cup \{u\}$  \label{step:upa}
\If{$u \in \beta_i$}
\State $\beta_i \leftarrow \beta_i \cup \{v\}$    \label{step:upb}
\algtext*{EndIf} \EndIf
\State {\bf go to} Step \eqref{step:grow}
\algtext*{EndIf} \EndIf
\end{algorithmic}
\end{algorithm*}

\subsection{Invariant analysis}
\begin{lemma} \label{lem:invariants} Invariants 1-5 are maintained by
  Algorithm \ref{alg:grow} until the condition in the {\bf if}
  statement in Step \eqref{step:bunch} is true. Furthermore, the
  algorithm terminates in $O(n \cdot k)$ iterations.
\end{lemma}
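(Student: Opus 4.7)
The plan is to prove Invariants 1--5 by induction on the sequence of tight-edge events processed by Algorithm \ref{alg:grow}, and to bound the number of such events separately. At $t=0$, all invariants follow directly from the initialization together with the standing assumption $d(B_i, h_j) > 0$ for all distinct $0 \leq i, j \leq \ell$: this assumption keeps $r$ out of every $M_i$, keeps each $B_k$ out of every $M_i$ with $i \neq k$, and forces any common vertex of two distinct moats $M_i, M_j$ to lie outside $\bigcup_k B_k$, hence in $\bar{B}$. Dual feasibility with value $\ell \cdot 0 = 0$ is immediate.

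For the inductive step, assume all invariants hold just before edge $uv$ becomes tight. In the stop branch nothing changes. In the continue branch I first verify Proposition \ref{prop:unique}: if $v$ lay in two distinct moats $M_i, M_j$ then by the inductive Invariant 3 $v \in \bar{B}$, and quasi-bipartiteness then forces $u \notin \bar{B}$; since every vertex of $V \setminus \bar{B}$ lies in some $\beta_k$ (taking $\beta_0 = \{r\}$), we would have $u \in \beta_k$ with at least one of $i, j$ different from $k$, so the stop branch would already have fired, a contradiction.

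With the unique moat $M_i$ identified, I verify each invariant is preserved by the updates $M_i \gets M_i \cup \{u\}$ and (when applicable) $\beta_i \gets \beta_i \cup \{v\}$. Invariant 1 reduces to $u \neq r$, which holds because otherwise $u \in \beta_0$ would trigger the stop. Invariant 2 follows from a standard Dijkstra-style calculation: with the inductive ball structure of $M_i$, the uniqueness of the moat containing $v$ implies that the dual constraint for $uv$ first becomes tight at exactly time $t = c_{uv} + d(v, h_i) = d(u, h_i)$. Invariant 3 is the crux: if $u$ were to lie in some $M_j$ with $j \neq i$ but $u \notin \bar{B}$, then $u \in \beta_k$ for some $k$, and the stop branch would already have fired on $(u, v)$ (the case $k = i$ is ruled out by the old $M_j \cap \beta_i = \emptyset$); the $\beta$-disjointness parts are handled analogously, and $M_i \cap \beta_{j'} = \emptyset$ prior to the update guarantees that a newly absorbed $v \in \beta_i$ is not already in some other $\beta_{j'}$. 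Invariant 4 is maintained because $v$ is added to $\beta_i$ only when $u \in \beta_i$; quasi-bipartiteness forces $u \in B_i$ (else both endpoints of $uv$ are in $\bar{B}$), making $u$ a valid mate of cost $c_{uv} \leq t \leq \Delta$. Invariant 5 is standard: the $\ell$ active duals grow at rate one and growing halts at the first tight event.

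For termination, each continue iteration strictly enlarges exactly one moat by a new vertex (since $uv \in \delta^{\rm in}(M_i)$ means $u \notin M_i$ before the update), and moats are monotone subsets of $V \setminus \{r\}$, so the total number of iterations is at most $\ell \cdot (n-1) = O(nk)$. The main obstacle will be Invariant 3: it is the only place where the quasi-bipartite structure is used in an essential way, and it requires the careful case analysis above to convert every putative disjointness violation into a scenario that must have already triggered the stop.
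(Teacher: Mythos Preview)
Your proof is correct and mirrors the paper's approach closely: the same induction on tight-edge events, the same derivation of Proposition~\ref{prop:unique} from Invariant~3 together with quasi-bipartiteness, the same case analyses for each invariant, and the same moat-monotonicity argument for the $O(nk)$ termination bound. One inconsequential slip: $\beta_0=B_0$, not $\{r\}$, but your covering claim $V\setminus\bar B=\bigcup_k B_k\subseteq\bigcup_k\beta_k$ is unaffected; and in your Invariant~4 step you tacitly use $v\in\bar B$, which (as in the paper) follows because $v\in M_i$ and Invariant~3 rule out $v\in B_j$ for $j\neq i$, while $v\in B_i$ makes the update a no-op.
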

\begin{proof}
  Clearly the invariants are true after the initialization steps (at
  time $t = 0$), given that $d(B_i,h_j) > 0$ for any $i \neq j$.
  To see why
  Algorithm \ref{alg:grow} terminates in a polynomial number of iterations, note that each iteration
  increases the size some moat by 1 and does not decrease the size of
  any moats. So after at most $kn$ iterations some moat will grow to include the virtual body of another moat,
  at which point the algorithm stops.
  
  Assume now that the invariants are true at some point just before
  Step \eqref{step:grow} is executed and that the condition in Step
  \eqref{step:bunch} is false after Step \eqref{step:grow} finishes.
  We will show that the invariants continue to hold just before the
  next iteration starts. We let $uv$ denote the edge that went tight that
  is considered in Step \eqref{step:grow}.
  We also let $t$ denote the total time the algorithm has executed
  (i.e. grown moats) up to this point. 
  
  Before proceeding with our proof, we exhibit the
  following useful fact. In what follows, let $M^{t'}_j$ be the moat around $h_j$ at any time $t' \leq t$ during
  the algorithm. This proposition demonstrates how we control the
  overlap of the moats by exploiting the quasi-bipartite structure.

  \begin{proposition} \label{prop:unique}
  If $uv \in \delta^{in}(M_i)$, 
    then $uv \not\in \delta^{in}(M^{t'}_j)$ for any $j \neq i$,
    and for any $t' \leq t$. 
  \end{proposition}
  \begin{proof}
    Suppose, for the sake of contradiction, that
    $uv \in \delta^{in}(M^{t'}_j)$ for some $j \neq i$ and
    $t' \leq t$.  Since $M^{t'}_j$ is a subset of $M_j$, the moat
    containing $h_j$ at time $t$, we must have $v \in M_j \cap
    M_i$.
    Invariant 3 now implies that $v \in \bar{B}$. Since $G$ is
    quasi-bipartite, then $u \in X$. Therefore
    $u \in B_{j'}$ for some $j'$. Since $j' \neq i$ or $j' \neq j$,
    then the terminating condition in Step \eqref{step:bunch} would have been satisfied as
    $u \in \beta_{j'}$. A contradiction.
  \end{proof}

Following Proposition \ref{prop:unique}, we let $i$ be the
  unique index such that $uv \in \delta^{in}(M_i)$ as in Step
  \eqref{step:unique}.

~

\noindent
  {\bf Invariant 1}\\ First note that $M_i$ never loses vertices during the algorithm's
  execution, and it therefore always contains head vertex $h_i$. 
  Also, vertex $u$ is not part of $B_0$ as otherwise the algorithm
  would have terminated in Step \eqref{step:bunch}. Hence $M_i \cup
  \{u\}$ also does not contain the root node $r$.

~

\noindent
  {\bf Invariant 2}\\
  This is just a reinterpretation of Dijkstra's algorithm in the primal-dual framework (e.g. Chapter 5.4 of \cite{papadimitriou}),
  coupled with the fact that no edge considered in Step \eqref{step:grow} in some iteration
  crosses more than one moat at any given time (Proposition \ref{prop:unique}).
  
~

\noindent
  {\bf Invariant 3}\\  Suppose
  $(M_i \cup \{u\}) \cap M_j \not\subseteq \bar{B}$ for some
  $i \neq j$. This implies $u \in M_j \setminus \bar{B}$, and
  hence $u \in B_j \subseteq \beta_j$. Thus, the termination condition
  in Step \eqref{step:bunch} was satisfied and the algorithm should have
  terminated; contradiction.

  If $v$ is not added to $\beta_i$, and thus $\beta_i$ remains
  unchanged, $\beta_i \cap \beta_j = M_j \cap \beta_i = \emptyset$
  continues to hold for $j \neq i$. We also must have that
  $(M_i \cup \{u\}) \cap \beta_j = \emptyset$ for $i \neq j$, as
  otherwise $u \in \beta_j$ and this would violate the termination
  condition in Step \eqref{step:bunch}.

  Now suppose that $v$ is added to $\beta_i$. Then for $j \neq i$ we
  still have $(\beta_i \cup \{v\}) \cap \beta_j = \emptyset$ as
  otherwise $v \in \beta_j$ which contradicts $v \in M_i$ and the fact
  that Invariant 3 holds at the start of this iteration. We also have
  that $M_j \cap (\beta_i \cup \{v\})=\emptyset$ as otherwise
  $v \in M_j$. But this would mean that $u \in M_j$ as well by
  Proposition \ref{prop:unique}.
  We established above that $(M_i \cup \{u\}) \cap M_j \subseteq \overline B$.
  However, $\{u,v\} \subseteq (M_i \cup \{u\}) \cap M_j \subseteq \overline B$
  contradicts the fact that $G$ is quasi-bipartite.
  
~

\noindent
  {\bf Invariant 4}\\  That $B_i \subseteq \beta_i$ is clear simply
  because we only add nodes to the sets $\beta_i$.  Suppose now that
  $v$ is added to $\beta_i$. In this case, $v \not\in B_i$ as
  $B_i \subseteq \beta_i$ from the start. We claim that $v$ can also
  not be part of $B_j$ for some $j \neq i$, since otherwise
  $\emptyset \neq B_j \cap M_i \subseteq \beta_j \cap M_i$,
  contradicting Invariant 3. Hence $v \in \bar{B}$.  Note that the
  quasi-bipartiteness of $G$ implies that $u \in X$, and hence
  $u \in B_i$. Proposition \ref{prop:unique} finally implies
  that only the moats crossed by $uv$ are moats around $i$,
  so since the algorithm only grows one moat around $i$ at any
  time we have $c_{uv} \leq t$, and this completes the proof of Invariant 4.

~

\noindent
  {\bf Invariant 5}\\  The Step \eqref{step:grow} stops the first time a
  constraint becomes tight, so feasibility is maintained. In each
  step, the algorithm grows precisely $\ell$ moats
  simultaneously. Because the objective function of \eqref{lp:dual} is
  simply the sum of the dual variables, then the value of the dual is
  just $\ell$ times the total time spent growing dual variables.
\end{proof}

\subsection{Augmenting $\pt$}\label{sec:augment}

To complete the final detail in the description of the algorithm, we
now show how to  construct the partial Steiner tree after Step
\eqref{step:bunch} has been reached.  Lemma \ref{lem:invariants} shows
that Invariants 1 through 5 hold just before Step \eqref{step:grow} in the final iteration.
Say the final iteration executes for $\delta$ time units and that $uv$ is the edge that goes
tight and was considered in Step \eqref{step:bunch}.

\begin{lemma}\label{lem:augment}
When Step \eqref{step:stop} is reached in Algorithm \ref{alg:grow}, we can efficiently find
a partial Steiner tree $\pt'$ with $\ell' < \ell$ non-root components such that
$cost(E(\pt')) \leq cost(E(pt)) + 2\frac{\ell - \ell'}{\ell} \cdot OPT_{LP}$.
\end{lemma}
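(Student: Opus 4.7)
The plan is to augment $\pt$ by purchasing the tight edge $uv$ that triggered Step~\eqref{step:bunch}, together with a cheapest $v \to h_{i'}$ path in $G$ and, if necessary, a short mate-edge entering $u$; these purchases will merge the two components $B_j$ and $B_{i'}$ into one and leave every other component untouched. I expect this to give $\ell' = \ell - 1$ at cost at most $2\Delta$.

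First I would name the ingredients. Let $uv$ be the edge flagged at Step~\eqref{step:bunch}, so $u \in \beta_j$ for some $0 \le j \le \ell$ and $v \in M_{i'}$ for some $i' \neq j$. Let $P$ be a cheapest $v \to h_{i'}$ path in $G$; Invariant~2 gives $c(P) = d(v,h_{i'}) \le \Delta$. If $u \in \bar B$ (so $u \in \beta_j - B_j$, which forces $j \neq 0$ since $\beta_0 = B_0$), Invariant~4 supplies a mate $u' \in B_j$ with $u'u \in E$ and $c_{u'u} \le \Delta$; call this edge $e_u$. Otherwise $u \in B_j$ (including the case $u = r$ when $j=0$) and set $e_u = \emptyset$.

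Next I would describe $\pt'$ explicitly. Replace $B_j$ and $B_{i'}$ by a single new component with head $h_j$ if $j \neq 0$ and head $r$ if $j=0$, vertex set $B_j \cup B_{i'} \cup \{u\} \cup V(P)$, and edge set $F_j \cup F_{i'} \cup E(P) \cup \{uv\} \cup \{e_u\}$; remove from $\bar B$ exactly those Steiner vertices now absorbed into the new component. All other components stay put, so $\ell' = \ell - 1$. The resulting object is a valid partial Steiner tree: the new edge set gives a directed path from the head to every new vertex (via $F_j$, then $e_u$ if present, then $uv$, then $P$, then $F_{i'}$); any intermediate terminal on $P$ lies in $M_{i'}$ by Invariant~2, and so by Invariant~3 cannot lie in $B_k \subseteq \beta_k$ for $k \neq i'$, forcing it to be in $B_{i'}$ already; and any Steiner node on $P$ absorbed from $\bar B$ cannot lie in $\beta_k$ for $k \neq i'$ by the same invariant, so no other component is disturbed.

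The heart of the argument is the cost bound. The added cost is $c_{uv} + c(P) + c(e_u)$, and $c(e_u) \le \Delta$ is immediate from Invariant~4. The crucial inequality is $c_{uv} + c(P) \le \Delta$, for which I would use Proposition~\ref{prop:unique}: the edge $uv$ has never lain in $\delta^{\rm in}(M_k^{t'})$ for any $k \neq i'$ and any $t' \le \Delta$, so every dual charge to $uv$ comes from a moat around $h_{i'}$. By Invariant~2, $v$ enters $M_{i'}$ at time $d(v,h_{i'})$; by Invariant~3 together with $u \in \beta_j$ and $j \neq i'$, $u$ never enters any moat around $h_{i'}$; hence $uv$ sits on the boundary of $M_{i'}^{t'}$ throughout the interval $[d(v,h_{i'}),\Delta]$, giving $c_{uv} = \Delta - d(v,h_{i'})$. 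Combined with $c(P) = d(v,h_{i'})$ this yields $c_{uv} + c(P) \le \Delta$ and total added cost at most $2\Delta$. Invariant~5 and LP duality now give $\ell \Delta = \sum_S y_S \le OPT_{LP}$, so the added cost is at most $2\Delta \le \tfrac{2\,OPT_{LP}}{\ell} = \tfrac{2(\ell - \ell')}{\ell}\cdot OPT_{LP}$, as required.

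The main obstacle will be keeping the cost accounting genuinely disjoint: $c_{uv} + c(P)$ is paid by moats around $h_{i'}$ and $c(e_u)$ by moats around $h_j$, which are distinct subsets of $V$ so their dual variables cannot be double-counted, but this only works because Proposition~\ref{prop:unique}—itself a direct consequence of the quasi-bipartite hypothesis—prevents $uv$ from ever crossing a moat around some $h_k$ with $k \neq i'$. That is precisely the place where quasi-bipartiteness is essential for obtaining the factor $2$ rather than something larger.
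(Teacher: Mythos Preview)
Your argument has a genuine gap in the cost bound. You invoke Proposition~\ref{prop:unique} to claim that the tight edge $uv$ was only ever charged by moats around a single head $h_{i'}$, and hence $c_{uv}=\Delta-d(v,h_{i'})$. But Proposition~\ref{prop:unique} is proved under the standing hypothesis that the termination condition in Step~\eqref{step:bunch} is \emph{false}; at time $\Delta$ it is true, and the conclusion can fail. Concretely, $v$ may lie in several moats simultaneously: set $J=\{\,i'\neq j : v\in M_{i'}\,\}$ as in the paper. Whenever $|J|\ge 2$ we have $v\in\bar B$ (Invariant~3), hence $u\in X\cup\{r\}\subseteq B_j$ and $e_u=\emptyset$; but then every $i'\in J$ contributes $\Delta-d(v,h_{i'})$ to the dual load on $uv$, so $c_{uv}=\sum_{i'\in J}(\Delta-d(v,h_{i'}))+(\text{possible contribution from }M_j)$. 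Taking, say, $|J|=3$ with $d(v,h_{i'})=0$ for each $i'\in J$ gives $c_{uv}=3\Delta$, so your added cost $c_{uv}+c(P)+c(e_u)=3\Delta>2\Delta$, and the bound $2\tfrac{\ell-\ell'}{\ell}\,OPT_{LP}$ (with $\ell-\ell'=1$) is violated.

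The fix is exactly what the paper does: merge \emph{all} components indexed by $J\cup\{j\}$, not just two. Buying $uv$, the mate edge $P_j$, and a shortest $v\to h_{i'}$ path for every $i'\in J$ costs at most $(|J|+1)\Delta$ by the tightness equation $\sum_{i'\in J\cup\{j\}}\epsilon_{i'}\ge c_{uv}$; since this merger reduces the number of non-root components by $|J|$, one has $\ell-\ell'=|J|$ and $(|J|+1)\le 2(\ell-\ell')$, which yields the stated bound. Your construction of the merged component and the verification that it forms a valid partial Steiner tree are fine; it is only the decision to merge a single pair, and the corresponding misuse of Proposition~\ref{prop:unique}, that break the argument.
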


\begin{proof}
Let $j$ be the unique index such that $u \in \beta_j$ at time
$\Delta$.  There is exactly one such $j$ because
$\beta_i \cap \beta_j = \emptyset$ for $i \neq j$ is ensured by the invariants. Next, let
$J = \{i' \neq j : v \in M_{i'}\}$ and note that $J$ consists of all
indices $i'$ (except, perhaps, $j$) such that
$uv \in \delta^{in}(M_{i'})$.  By the termination condition,
$J \neq \emptyset$. Vertex $u$ lies in $\beta_j$ by definition. If
$u \not\in B_j$ then we let $w$ be the mate of $u$ as defined in
Invariant 4. Otherwise, if $w \in B_j$, we let $w=u$.

For notational convenience,
we let $P_j$ be the path consisting of the single edge $wu$ (or just the trivial path with no edges if $w = u$). 
In either case, say cost of $P_j$ is $\Delta - \epsilon_j$ where $\epsilon_j \geq 0$ (c.f. Invariant 4).
For each $i' \in J$, let $P_{i'}$ be a shortest
$v,h_{i'}$-path. Invariant 2 implies that
\begin{equation}\label{costpi}
c(P_{i'}) = \Delta - \epsilon_{i'},
\end{equation}
for some $\epsilon_{i'} \geq 0$.
Observe also that the tightness of $uv$ at time $\Delta$ and the
definition of $J$ imply that
\begin{equation}\label{uvtight}
 \sum_{i' \in J \cup \{j\}} \epsilon_{i'} \geq c_{uv}. 
\end{equation}
In fact, precisely a $\epsilon_{i'}$-value of the dual variables for $i' \neq j$ contribute to $c_{uv}$; the contribution of $j$'s variables
to $c_{uv}$ is at most $\epsilon_{i'}$.

Construct a partial Steiner tree $\pt'$ obtained from $\pt$ and Algorithm \ref{alg:grow} as follows.
\begin{itemize}
\item The sets $B_{j'}, F_{j'}$ and head $h_{j'}$ are unchanged for all $j' \not\in J \cup\{j\}$.
\item Replace the components $\{B_{i'}\}_{i' \in J \cup \{j\}}$ with a component $B := \bigcup_{i' \in J \cup \{j\}} \left(B_{i'} \cup V(P_{i'})\right)$ having
head $h := h_j$. The edges of this component in $\pt'$ are $F := \bigcup_{i' \in J \cup \{j\}} (F_{i'} \cup E(P_{i'})) \cup \{uv\}$.
\item The free Steiner nodes $\bar{B}'$ of $\pt'$ are the Steiner nodes not contained in any of these components.
\end{itemize}
Namely, $\bar{B}'$ consists of those nodes in $\bar{B}$ that are not contained on any path $P_{i'}, i' \in J \cup \{j\}$.

  We show that Steiner tree $\pt'$ as constructed above satisfies the
  conditions stated in Lemma \ref{lem:grow}.
  We first verify that $\pt'$ as constructed above is indeed a valid
  partial Steiner tree. Clearly the new sets
  $\bar{B}', \{B_i\}_{i \not\in J + j}$ and $B$ partition $V$ and
  $\bar{B}'$ is a subset of Steiner nodes.

Note that if $0 \in J \cup \{j\}$ in the above construction, then $j = 0$ because no moat contains $r$.
Thus, if $B_0$ is replaced when $B$ is constructed, then $r$ is the head of this new component.

Next, consider any $b \in B$. If $b \in B_j$ then there is an
$h_j,b$-path in $F_j \subseteq F$. If $b \in B_{i'}, i' \neq j$ then it
can be reached from $h_j$ in $(B,F)$ as follows. Follow the $h_j,w$-path in $F_j$,
then the $w,u$ path $P_j$, cross the edge $uv$, follow $P_{i'}$ to reach $h_{i'}$,
and finally follow the $h_{i'},b$-path in $F_{i'}$.
Finally, if $b \not\in B_{i'}$ for any $i' \in J + j$
then $b$ lies on some path $P_{i'}$, in which case it can be reached
in a similar way.

It is also clear that $E(\pt) \subseteq E(\pt')$ and that the number
of non-root components in $\pt'$ is $\ell-|J| < \ell$. Also,
$\cost(E(\pt')) - \cost(E(\pt))$ is at most the cost of the the paths
$\{P_{i'}\}_{i' \in J + i}$ plus $c_{uv}$.

It now easily follows from \eqref{costpi} and \eqref{uvtight} that 
\[
  \sum_{i' \in J \cup \{j\}} \cost(E(P_{i'})) + c_{uv} 
                \leq \sum_{i'\in J \cup \{j\}} (\Delta -
                                   \epsilon_{i'})+c_{uv} 
                                   \leq (|J|+1) \Delta \leq \frac{|J|+1}{\ell} \cdot OPT_{LP}. \]
The last bound follows because the feasible dual we have grown has value $\ell \cdot \Delta \leq OPT_{LP}$.
Let $\ell' = \ell - |J|$ be the number of nonroot components in $\pt'$.
Conclude by observing $|J|+1 = \ell - \ell' + 1 \leq 2(\ell - \ell')$.
\end{proof}

To wrap things up, executing Algorithm \ref{alg:grow} and constructing the partial Steiner tree as in Lemma \ref{lem:augment} yields
the partial Steiner tree that is promised by Lemma \ref{lem:grow}.

%%%%%%%%%%%%%%%%%%%%%%%%%%%%%%%%%%%%%%%%%%%%%%%%%%%%%%%%%%%%%%%%%%%%%%%%
%%%%%%%%%%%%%%%%%%%%%%%%%%%%%%%%%%%%%%%%%%%%%%%%%%%%%%%%%%%%%%%%%%%%%%%%

\section{Conclusion}

We have shown that the integrality gap of LP relaxation (\ref{lp:primal}) is $O(\log k)$ in quasi-bipartite instances of directed Steiner tree.
The gap is known to be $\Omega(\sqrt k)$ in 4-layered instances \cite{zosin:khuller} and $O(\log k)$ in 3-layered instances \cite{friggstad+14}.
Since quasi-bipartite graphs are a generalization 2-layered instances, it is natural to ask if there is a generalization of 3-layered instances
which has an $O(\log k)$ or even $o(\sqrt k)$ integrality gap.

One possible generalization of 3-layered graphs would be when
the subgraph of $G$ induced by the Steiner nodes does not have a node with both positive indegree and positive outdegree. 
None of the known results on directed Steiner tree suggest such instances have a bad gap.

Even when restricted to 3-layered graphs, a straightforward adaptation of our algorithm
that grow moats around the partial Steiner tree heads until some partial Steiner trees absorbs another
fails to grow a sufficiently large dual to pay for the augmentation within any reasonable factor. A new idea is needed.

%%%%%%%%%%%%%%%%%%%%%%%%%%%%%%%%%%%%%%%%%%%%%%%%%%%%%%%%%%%%%%%%%%%%%%%%
%%%%%%%%%%%%%%%%%%%%%%%%%%%%%%%%%%%%%%%%%%%%%%%%%%%%%%%%%%%%%%%%%%%%%%%%

%\subparagraph*{Acknowledgements}
%I want to thank \dots
%\appendix
%\section{Morbi eros magna}
%
%Morbi eros magna, vestibulum non posuere non, porta eu quam. Maecenas vitae orci risus, eget imperdiet mauris. Donec massa mauris, pellentesque vel lobortis eu, molestie ac turpis. Sed condimentum convallis dolor, a dignissim est ultrices eu. Donec consectetur volutpat eros, et ornare dui ultricies id. Vivamus eu augue eget dolor euismod ultrices et sit amet nisi. Vivamus malesuada leo ac leo ullamcorper tempor. Donec justo mi, tempor vitae aliquet non, faucibus eu lacus. Donec dictum gravida neque, non porta turpis imperdiet eget. Curabitur quis euismod ligula. 

%%
%% Bibliography
%%

%% Either use bibtex (recommended), but commented out in this sample

%\bibliography{dummybib}

%% .. or use bibitems explicitely

%\nocite{Simpson}

%\begin{thebibliography}{50}
%\bibitem{Simpson} Homer J. Simpson. \textsl{Mmmmm...donuts}. Evergreen Terrace Printing Co., Springfield, Somewhere, USA, 1998
%\end{thebibliography}

%%%%%%%%%%%%%%%%%%%%%%%%%%%%%%%%%%%%%%%%%%%%%%%%%%%%%%%%%%%%%%%%%%%%%%%%
%%%%%%%%%%%%%%%%%%%%%%%%%%%%%%%%%%%%%%%%%%%%%%%%%%%%%%%%%%%%%%%%%%%%%%%%

%%%%%%%%%%%%%%%%%%%%%%%%%%%%%%%%%%%%%%%%%%%%%%%%%%%%%%%%%%%%%%%%%%%%%%%%
%%%%%%%%%%%%%%%%%%%%%%%%%%%%%%%%%%%%%%%%%%%%%%%%%%%%%%%%%%%%%%%%%%%%%%%%

\end{document}